\documentclass[11pt]{article}             
\usepackage{osid}                         %

\usepackage{amsmath,amssymb,amscd,amsthm,mathrsfs}
\usepackage{bbm}
\usepackage{tikz-cd}

\usepackage[bookmarks=false,pdfstartview={FitH}]{hyperref}



\theoremstyle{plain}
\newtheorem{lemma}{Lemma}
\newtheorem{thm}{Theorem}

\theoremstyle{definition}

\theoremstyle{remark}


\newcounter{app}
\renewcommand*{\theapp}{\Alph{app}}

\newcommand*\app[1]{%
 ~\refstepcounter{app}\label{app_#1}\hypertarget{app:#1}{\theapp}}
\newcommand*\appref[1]{\hyperlink{app:#1}{\ref*{app_#1}}}

\title{Quantum-Dynamical Semigroups and the Church of the\\Larger Hilbert Space} 
\author{Frederik vom Ende
	\\[1mm]{\footnotesize\it Technische Universit{\"a}t M{\"u}nchen, 
	School of Natural Sciences, 85747 Garching and\\
   Munich Centre for Quantum Science and Technology (MCQST) \& Munich Quantum Valley (MQV), Schellingstra{\ss}e 4, 80799 M{\"u}nchen, Germany\\ {frederik.vomende@gmail.com}}\\[2ex]
}

\AtBeginDocument{
  \label{CorrectFirstPageLabel}
  
}

\begin{document}

\maketitle
\begin{abstract}
In this work we investigate Stinespring dilations of quantum-dynamical semigroups,
which are known to exist by means of a constructive proof given by Davies in the early 70s.
We show that if the semigroup describes an open system, that is, if it does not consist of only unitary channels, then the evolution of the dilated closed system has to be generated by an unbounded Hamiltonian;
subsequently the environment has to correspond to an infinite-dimensional Hilbert space, regardless of the original system.
Moreover, we prove that the second derivative of Stinespring dilations with a bounded total Hamiltonian yields the dissipative part of some quantum-dynamical semigroup -- and vice versa.
In particular this characterizes the generators of quantum-dynamical semigroups via Stinespring dilations.
\end{abstract}

\section{Introduction}
Completely positive maps, which are among the fundamental objects in modern quantum physics and quantum information theory, admit the following central representations: the Choi \cite{Choi75} or Choi-Jamio\l{}kowski \cite{Jamiolkowski72} matrix, the operator-sum form using Kraus operators \cite{Kraus71}, and the Stinespring representation \cite{Stinespring55}.
Each of these comes with their own advantages; to name just a few: complete positivity is checked most easily via the Choi matrix, the task of numerically generating random quantum maps is done most efficiently via
random Kraus operators \cite{Kukulski21}, and the Stinespring dilation has a direct physical interpretation which can also be applied to certain experimental setups \cite{Braun01,Haake10}.
Details regarding these representations can be found in Appendix \protect\appref{A}, and for the reader's convenience we sketched the relation between these concepts
in Figure~\ref{fig_choi_kraus_stinespring}.
For more details on these representations from the point of view of quantum information theory we refer to \cite[Ch.~8.2]{NC10} \& \cite[Ch.~4.2 ff.]{Heinosaari12}.
\begin{figure}[!ht]
\begin{center}
\begin{tikzcd}
\text{Choi-Jamio\l{}kowski}\quad \arrow[r, "\text{diagonalize}", shift left] & \quad\text{Kraus}\ \qquad \arrow[l, "\text{vectorize}", shift left] \arrow[r, "\substack{\text{collect Kraus op.}\\\text{in larger matrix}}", shift left] & \ \qquad\text{Stinespring} \arrow[l, "\substack{\text{first ``block-column''}\\\text{of larger matrix}}", shift left]
\end{tikzcd}
\end{center}
\caption{
Interconversion scheme between the Choi-Jamio\l{}kowski matrix, the Kraus operators, and the Stinespring representation. 
}\label{fig_choi_kraus_stinespring}
\end{figure}
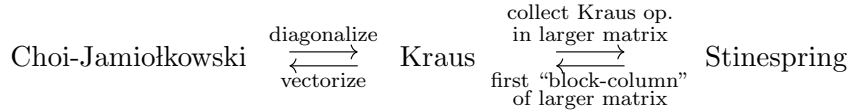

Note that when we say Stinespring dilation we do not mean the general theorem for completely positive maps 
but rather the special form it gets if the map in question $\Phi$ is not only completely positive but also trace-preserving. Loosely speaking there then exists an environment such that the action of $\Phi$ can be written as the restriction of some unitary action on the larger (closed) system, cf.~\eqref{eq:Stinespring_cptp} in Appendix~\appref{A}.
While this result does not involve the notion of time, fifty years ago it was Davies who first proved that an analogous result holds if $\Phi$ is replaced by a dynamical process $(\Phi_t)_{t\geq 0}$, assuming it acts on a finite-dimensional system and is ``memory-less'' \cite{Davies72}.
Note that every such dynamical process is a semigroup of contractions
so there always exists a one-parameter unitary group which is a dilation of the original semigroup \cite[Ch.~I, Theorem 8.1 \& Ch.~III.9]{SzNagyFoias}.
However the strength of Davies' result is that his dilation admits more structure, and that his proof is constructive.
%
%
%
Recently Burgarth et al.~have improved Davies' result by explicitly constructing a dilation of the same type which can even be dynamically decoupled \cite{Burgarth22},
complementing a result of Gough and Nurdin that there also exist dilations which cannot be dynamically decoupled \cite{Gough17}.

Although quantum processes which have a ``memory'' are of interest in modern quantum theory \cite{Wolf08b,Plenio_NMarkov_Review2014,Bhatta20,Ptas22,Spaventa22}
the fundamental importance of memory-less processes as well as their simple, yet rich mathematical structure encourages us to further investigate Davies' result and, more generally, investigate if, when, and how such processes can be cast into the framework of Stinespring dilations.
\section{Main Results}\label{sec:main}
While our main results revolve around quantum dynamics in finite dimensions, we will need general Hilbert spaces as well as (operators on) operators on such spaces as auxiliary objects.
Thus to set the framework -- orienting ourselves towards \cite[Ch.~15 \& 16]{MeiseVogt97en} -- given a real or complex Hilbert space $\mathcal K$ we denote the bounded linear operators on $\mathcal K$ by $\mathcal B(\mathcal K)$, and $\mathcal B^1(\mathcal K)$ is the collection of all trace-class operators on $\mathcal K$, i.e.~all compact operators\footnote{
A linear map $A $ between normed spaces is called compact if the closure of the image of the closed unit ball under $A $ is compact.
Now if $A $ operates between Hilbert spaces over the same field, then compactness is equivalent to a representation $A =\sum_{n\in N}s_n|f_n\rangle\langle g_n|$, $N\subseteq\mathbb N$ where $\{f_n\}_{n\in N}$, $\{g_n\}_{n\in N}$ are orthonormal systems in the respective Hilbert space and $\{s_n\}_{n\in N}\subset(0,\infty)$ are the unique singular values of $A $ \cite[Prop.~16.3]{MeiseVogt97en}.
}
on $\mathcal K$ the 
singular values of which sum up to a finite value.
Topological structure on those spaces is induced by their ``defining'' norms: $\mathcal B(\mathcal K)$ is a Banach space together with the usual operator norm $\|\cdot\|_\infty$ and $\mathcal B^1(\mathcal K)$ becomes a Banach space when equipped with the trace norm $\|\cdot\|_1$ which is defined as the sum of the singular values of the input operator.
The name ``trace class'' is justified by the fact that it is meaningful to define the trace of such operators in the usual way, that is, as $\sum_{i\in I}\langle f_i,(\cdot)f_i\rangle$ with $(f_i)_{i\in I}$ any orthonormal basis of the underlying Hilbert space.
The trace then has the same properties as in finite dimensions, it is a continuous linear form on $\mathcal B^1(\mathcal K)$, and, most importantly, the trace class is a two-sided ideal in the bounded operators.
A subset of the trace class most important for quantum theory are the states, which are defined as the positive semi-definite trace-class operators of trace one.
Finally, going one level higher we will also be concerned with linear maps between the bounded operators or between trace classes.
As these are linear maps between normed spaces the natural choice to quantify boundedness is via the operator norm induced by the norms on domain and co-domain.
Similar to \cite{PG06} we will denote the operator norm of linear maps between trace class operators (bounded operators) by $\|\cdot\|_{1\to 1}$ ($\|\cdot\|_{\infty\to\infty}$).

Having established notation let us now come to this work's two main objects: given a map $\Phi$ from $[0,\infty)$ into the linear maps on $\mathbb C^{n\times n}$
\begin{itemize}
\item one calls $\Phi$ a \textit{quantum-dynamical semigroup}
(for short: \textsc{qds}) 
if it is a one-parameter semigroup ($\Phi(t)\circ\Phi(s)=\Phi(t+s)$ for all $s,t\geq 0$) of completely positive (cf.~Appendix~\appref{A}) trace-preserving maps which is strongly continuous at zero\footnote{
This means $\lim_{t\to 0^+}\|\Phi(t)(A)-A\|_1=0$ for all $A\in\mathbb C^{n\times n}$.
However, while this is the standard formulation of continuity of quantum-dynamical semigroups, because we are concerned with operators $\Phi_t$ which have finite-dimensional domain this is the same as usual norm continuity $\lim_{t\to 0^+}\|\Phi(t)-\operatorname{id}_{n}\|_{1\to 1}=0$ \cite[Prop.~2.1.20 (iv,b)]{vE_PhD_2020}.
}.
For convenience we will henceforth write $\Phi_t:=\Phi(t)$, as well as $\textsc{cptp}(n)$ for the collection of all completely positive trace-preserving linear maps on $\mathbb C^{n\times n}$.
Now the generators of such semigroups have been classified by the celebrated result of Gorini, Kossakowski, Sudarshan, and Lindblad \cite{GKS76,Lindblad76}. They showed that $\Phi$ is a \textsc{qds} if and only if there exists $H\in\mathbb C^{n\times n}$ Hermitian as well as a finite collection of matrices $\{V_j\}_{j\in J}\subset\mathbb C^{n\times n}$ such that $\Phi_t=e^{tL}$ for all $t\geq 0$ where
$
L=-i\operatorname{ad}_{H}-{\bf\Gamma}
$
and
\begin{equation}\label{eq:lindblad_V}
{\bf\Gamma}=\sum_{j\in J}\Big( \frac12 \big(V_j^* V_j (\cdot)+(\cdot) V_j^* V_j\big)-V_j(\cdot) V_j^* \Big)\,.
\end{equation}
Sometimes we refer to \textsc{qds}s as (time-independent) Markovian, cf.~\cite{Wolf08a}.
\item we call $\Phi$ a \textit{Stinespring curve}
if there exists a complex Hilbert space $\mathcal K$, a self-adjoint operator $H$ on $\mathcal K$, and a state $\omega$ on $\mathcal K$
such that
\begin{equation}\label{eq:Stinespring_curve}
\Phi_t\equiv\operatorname{tr}_{\mathcal K}\big(e^{iHt}\big((\cdot)\otimes\omega\big)e^{-iHt}\big)
\end{equation}
for all $t\geq 0$.
Here $\operatorname{tr}_{\mathcal K}:\mathcal B^1(\mathbb C^n\otimes\mathcal K)\to\mathbb C^{n\times n}$ is the usual partial trace over $\mathcal K$, that is, $\operatorname{tr}_{\mathcal K}(A)$ is the unique $n\times n$ matrix which satisfies
$
\operatorname{tr}(B\operatorname{tr}_{\mathcal K}(A))=\operatorname{tr}((B\otimes\mathbbm1_{\mathcal K})A)
$ for all $B\in\mathbb C^{n\times n}$.
Moreover we call a Stinespring curve \textit{type~I} if
there exists a bounded self-adjoint operator $H$ such that \eqref{eq:Stinespring_curve} holds,
and \textit{type~II} otherwise.
\end{itemize} 
The aim of this article is to clarify the relation between these two concepts.
After all, the Stinespring dilation theorem for \textsc{cptp} maps (cf.~Eq.~\eqref{eq:Stinespring_cptp} in Appendix~\appref{A}) shows that every quantum map can be interpreted as the restriction of a larger closed system where the latter consists of the original system together with a sufficiently large environment. 
Thus is it most reasonable to assume that this interpretation continues to hold if the (stationary) quantum map is replaced by a suitable dynamical process.

Our first result (Theorem~\ref{thm_main} below) shows that, while finite-dimensional quantum-dynamical semigroups are known to be Stinespring curves, they are of type~I (if and) only if they describe a closed system.
In other words casting a dynamical (Markovian) open system interaction into the Stinespring framework forces the dynamics of the larger closed system to be generated by an unbounded Hamiltonian.
In particular the environment used for such a Stinespring dilation has to be infinite-dimensional.
We remark it has been argued that purely exponential decay (i.e.~Markovian dynamics) can only occur if the Hamiltonian of system plus environment is unbounded
below and above \cite{Beau17}. However this has been shown to only be partially true as a dilation for the example of qubit phase damping has been constructed where the overall (unbounded) Hamiltonian is positive, hence bounded from below \cite{Burgarth17}.

Thus while not entirely new, we nonetheless state the following as a theorem; on the one hand it serves to clarify the relation between our two main objects, and on the other hand our rather simple proof will motivate our second main result.
\begin{thm}\label{thm_main}
Every finite-dimensional quantum-dynamical semigroup is a Stinespring curve. Moreover, it is of type~I if and only if it is unitary at all times.
\end{thm}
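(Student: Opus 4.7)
The first assertion is exactly Davies' dilation theorem for norm-continuous \textsc{qds} on finite-dimensional systems \cite{Davies72}---indeed this is what motivated the definition of a Stinespring curve in the first place---so I would invoke it directly. For the characterisation of type~I curves I would treat the two implications separately.

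The ``if'' direction is essentially free: whenever $\Phi_t(\rho)=e^{iH_0t}\rho\,e^{-iH_0t}$ for some Hermitian $H_0\in\mathbb{C}^{n\times n}$, the representation \eqref{eq:Stinespring_curve} holds with trivial environment $\mathcal K=\mathbb C$, $\omega$ the unique (scalar) state, and $H=H_0$; since $H_0$ lives on a finite-dimensional space, it is automatically bounded.

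The ``only if'' direction is the substantive step. My plan is to exploit that boundedness of $H$ makes the Stinespring formula analytic in $t$, to extend it to negative times, and then to combine this with the rigidity of \textsc{cptp} maps. Concretely: assume $\Phi_t=\operatorname{tr}_\mathcal K(e^{iHt}((\cdot)\otimes\omega)e^{-iHt})$ for $t\ge 0$, with $H$ bounded self-adjoint on $\mathbb C^n\otimes\mathcal K$, and let $\Psi_t$ denote the right-hand side read off for every $t\in\mathbb R$. The map $t\mapsto e^{iHt}$ is entire (in operator norm) because $H$ is bounded, and since partial trace and tensoring with the fixed state $\omega$ are bounded linear operations, $t\mapsto\Psi_t$ extends to an entire $\mathcal B(\mathbb C^{n\times n})$-valued function. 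On the other hand $t\mapsto e^{tL}$ is entire, the generator $L$ of a finite-dimensional \textsc{qds} being bounded. The two entire functions agree on $[0,\infty)$ by hypothesis, hence they coincide everywhere by the identity theorem. So for every $t\ge 0$ one has $\Phi_t^{-1}=e^{-tL}=\Psi_{-t}$; but inspecting the formula for $\Psi_{-t}$ shows it to be a partial trace of a unitary conjugation applied to a product state, hence \textsc{cptp}. A \textsc{cptp} map whose inverse is again \textsc{cptp} is necessarily a unitary channel (a standard consequence of Wigner's theorem / the Kadison rigidity of surjective state-space isometries), so $\Phi_t$ itself must be unitary for every $t\ge 0$.

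The main obstacle is really just the analytic-continuation step, and this is precisely where the type~I hypothesis enters: if $H$ were unbounded, $t\mapsto e^{iHt}$ would only be strongly continuous (not norm-analytic in any complex neighbourhood), and the entire-function argument would break down---consistent with the fact that only unitary semigroups can be of type~I. Everything else---Davies' theorem, the triviality of the ``if'' direction, and the rigidity of \textsc{cptp} maps with \textsc{cptp} inverses---is standard and requires no new work.
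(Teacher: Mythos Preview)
Your argument is correct, but it takes a genuinely different route from the paper for the ``only if'' direction. The paper differentiates the Stinespring formula once at $t=0$: boundedness of $H$ justifies the product rule inside the partial trace, and after a short computation one obtains $\dot\Phi_0=i\operatorname{ad}_{\operatorname{tr}_\omega(H)}$ with $\operatorname{tr}_\omega(H)$ the ``partial trace of $H$ with respect to $\omega$''. Since $(\Phi_t)_{t\ge 0}$ is a \textsc{qds}, $\dot\Phi_0$ is its \textsc{gksl}-generator, hence $\Phi_t=e^{it\operatorname{tr}_\omega(H)}(\cdot)e^{-it\operatorname{tr}_\omega(H)}$ is unitary. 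Your approach instead exploits boundedness of $H$ to analytically continue both $t\mapsto\Psi_t$ and $t\mapsto e^{tL}$ to entire functions, identifies them via the identity theorem, and concludes that $\Phi_t^{-1}=\Psi_{-t}$ is again \textsc{cptp}; the rigidity of \textsc{cptp} maps with \textsc{cptp} inverse then forces $\Phi_t$ to be a unitary channel. Both arguments pivot on the same place---boundedness of $H$ is exactly what makes either the differentiation or the analytic continuation go through---but the paper's computation buys something extra: it identifies the effective system Hamiltonian explicitly as $\operatorname{tr}_\omega(H)$, a formula that feeds directly into the second-order expansion \eqref{eq:markov_taylor} used later. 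Your route is cleaner in that it avoids any operator calculus, at the cost of importing the Wigner/Kadison rigidity statement as a black box and not yielding the explicit generator.
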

\begin{proof}[Proof idea]
The existence result was first due to Davies \cite{Davies72}, see also \cite[Ch.~9.4]{Davies76}.
The only non-trivial statement left to show is that 
if a \textsc{qds} $(\Phi_t)_{t\geq 0}$ is at the same time a type~I Stinespring curve (i.e.~\eqref{eq:Stinespring_curve} holds for some bounded operator $H$), then $\Phi_t$ has to be unitary for all $t\geq 0$.
Our strategy is to
differentiate \eqref{eq:Stinespring_curve}:
\begin{align*}
\frac{d}{dt}\Phi_t\Big|_{t=0}&=\operatorname{tr}_{\mathcal K}\Big( \frac{d}{dt}e^{iHt}\Big|_{t=0}((\cdot)\otimes\omega)e^{-iH\cdot 0} \Big)\\
&\qquad+\operatorname{tr}_{\mathcal K}\Big(e^{iH\cdot 0} ((\cdot)\otimes\omega) \frac{d}{dt}e^{-iHt}\Big|_{t=0}\Big)\\
&=\operatorname{tr}_{\mathcal K}\big(iH((\cdot)\otimes\omega)\big)+\operatorname{tr}_{\mathcal K}\big(((\cdot)\otimes\omega)(-iH)\big)\,.
\end{align*}
This works because $\operatorname{tr}_{\mathcal K}$ is a continuous linear operator and because $H$ is bounded.
At this point note that the linear map $X\mapsto \operatorname{tr}_{\mathcal K}\big(iH((X\otimes\omega)\big)$ (resp.~$X\mapsto \operatorname{tr}_{\mathcal K}\big((X\otimes\omega)(-iH)\big)$) on $\mathbb C^{n\times n}$ is nothing but the left (resp.~right) multiplication with the matrix $i\operatorname{tr}_\omega(H)$ (resp.~$-i\operatorname{tr}_\omega(H)$), cf.~Lemma~\ref{lemma_rewrite_trace_expression} in Appendix~\appref{B}.
Here $\operatorname{tr}_\omega(H)$ is the Hermitian $n\times n$
matrix which satisfies
$$
\operatorname{tr}\big( \operatorname{tr}_\omega(H)X \big)=\operatorname{tr}\big(H(X\otimes\omega)  \big)
$$
for all $X\in\mathbb C^{n\times n}$ (called ``partial trace of $H$ with respect to $\omega$''), cf.~also \cite[Ch.~9, Lemma 1.1]{Davies76}.
Therefore $\frac{d}{dt}\Phi_t|_{t=0}\equiv i[\operatorname{tr}_\omega(H),\,\cdot\,]$. But by assumption $(\Phi_t)_{t\geq 0}$ is a \textsc{qds} so its derivative at zero is its \textsc{gksl}-generator $L$ \cite{GKS76,Lindblad76}, hence
$\Phi_t\equiv e^{it\operatorname{ad}_{\operatorname{tr}_\omega(H)}}= e^{it \operatorname{tr}_\omega(H)}(\cdot)e^{-it \operatorname{tr}_\omega(H)}$
showing that $\Phi_t$ is a unitary channel for all $t\geq 0$.
The detailed proof can be found in Appendix~\appref{B}.
\end{proof}
Put simply, the reason why quantum-dynamical semigroups which describe an open system have to be type~II Stinespring curves is that the first derivative of type~I curves corresponds to closed system dynamics. More precisely the first derivative of \eqref{eq:Stinespring_curve} -- assuming type~I -- is given by $i\operatorname{ad}_{\operatorname{tr}_\omega(H)}$
which on the other hand has to be the generator of the semigroup itself.
We emphasize that Theorem 1 continues to hold for Markovian dynamics on any time-interval $[0,t_f]$, $t_f>0$ because our argument relies on a quantity that is local (at time zero).

So far we did not see any dissipative effects of Stinespring curves. However, such curves---by design---model the interaction with an environment so there have to be irreversible parts to its action.
This has to do with the well-known fact that the closed system part is first order in $t$, while actual environment interaction appears only from the second derivative onward.
This is what the following result will be about: not only does the second derivative of type~I Stinespring curves look like the generator of a (purely dissipative) \textsc{qds}, but every such semigroup is the second derivative of some type~I Stinespring curve: 


\begin{thm}\label{thm_main_2}
Given any $n\in\mathbb N$ the following statements hold.
\begin{itemize}
\item[(i)] For every type~I Stinespring curve $\Phi$
there exists a set $\{V_j\}_{j\in J}\subset\mathbb C^{n\times n}$ with $|J|\leq\min\{\operatorname{rk}(\omega)\operatorname{dim}(\mathcal K),n^2\}$
such that
\begin{equation}\label{eq:stinespring_second_der}
\ddot\Phi_0=-\sum_{j\in J}{\bf\Gamma}_{V_j}\,.
\end{equation}
\item[(ii)] Given $\{V_j\}_{j\in J}\subset\mathbb C^{n\times n}$ finite there exists a type~I Stinespring curve $\Phi$
such that \eqref{eq:stinespring_second_der} holds.
Moreover the ancilla Hilbert space can be chosen finite-dimensional.
\end{itemize}
\end{thm}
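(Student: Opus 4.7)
The plan is to derive both parts from a direct computation of $\ddot\Phi_0$ for a Stinespring curve, exploiting that a bounded $H$ lets us interchange differentiation with the partial trace, and that the spectral decomposition of $\omega$ supplies a natural family of candidate Lindblad operators.

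For part~(i), the first step is to compute
\[
\ddot\Phi_0(X)=-\operatorname{tr}_{\mathcal K}\bigl([H,[H,X\otimes\omega]]\bigr),
\]
which is legal because $H$ bounded makes $t\mapsto e^{iHt}$ norm-analytic while $\operatorname{tr}_{\mathcal K}$ is a bounded linear map on $\mathcal B^1(\mathbb C^n\otimes\mathcal K)$. Next, I would spectrally decompose $\omega=\sum_{m\le r}p_m|\psi_m\rangle\langle\psi_m|$ with $r:=\operatorname{rk}(\omega)$, complete $\{|\psi_m\rangle\}_{m\le r}$ to an ONB $\{|m\rangle\}$ of $\mathcal K$, and use this same ONB to evaluate the partial trace. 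Writing $W_m:=\mathbbm{1}\otimes|m\rangle$ and setting
\[
V_{lm}:=\sqrt{2p_m}\,W_l^*HW_m\in\mathbb C^{n\times n}\qquad(m\le r,\ l\ \text{any}),
\]
the identities $\sum_lW_lW_l^*=\mathbbm{1}_{\mathbb C^n\otimes\mathcal K}$ and $W_m^*W_l=\delta_{lm}\mathbbm{1}_n$ turn the three terms of the expanded double commutator into
\[
\operatorname{tr}_{\mathcal K}\bigl(H(X\otimes\omega)H\bigr)=\tfrac12\textstyle\sum_{l,m}V_{lm}XV_{lm}^*,\qquad \operatorname{tr}_{\mathcal K}\bigl(H^2(X\otimes\omega)\bigr)=\tfrac12\bigl(\textstyle\sum_{l,m}V_{lm}^*V_{lm}\bigr)X,
\]
plus the mirror identity for $\operatorname{tr}_{\mathcal K}((X\otimes\omega)H^2)$. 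Substituting yields $\ddot\Phi_0=-\sum_{l,m}\mathbf{\Gamma}_{V_{lm}}$ directly in the GKSL form \eqref{eq:lindblad_V}, with at most $r\cdot\dim\mathcal K$ nonzero summands. The alternative bound $n^2$ I would obtain by the standard Kossakowski-matrix reduction: any operator of the form $-\sum\mathbf{\Gamma}_{V_j}$ on $\mathbb C^{n\times n}$ admits a representation with at most $n^2$ summands after diagonalizing its (positive semidefinite) Kossakowski matrix, which lives in an $n^2$-dimensional coefficient space.

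For part~(ii), the natural Ansatz is to run (i) in reverse with a pure ancilla state. Take $\mathcal K:=\mathbb C^{|J|+1}$ with ONB $\{|0\rangle\}\cup\{|j\rangle\}_{j\in J}$, set $\omega:=|0\rangle\langle 0|$ (so $r=1$ and $p_0=1$), and define
\[
H:=\tfrac{1}{\sqrt 2}\sum_{j\in J}\bigl(V_j\otimes|j\rangle\langle 0|+V_j^*\otimes|0\rangle\langle j|\bigr),
\]
which is manifestly self-adjoint and bounded. Applying the recipe of (i) then gives $V_{j0}=\sqrt 2\,W_j^*HW_0=V_j$ for $j\in J$, while $V_{00}=0$ since the $(0,0)$-block of $H$ vanishes. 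Hence $\ddot\Phi_0=-\sum_{j\in J}\mathbf{\Gamma}_{V_j}$ with $\dim\mathcal K=|J|+1<\infty$.

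The computation is bookkeeping-heavy rather than conceptually hard. The main obstacle I anticipate is presenting (i) cleanly: one must justify exchanging $\operatorname{tr}_{\mathcal K}$ with $d^2/dt^2$ (routine because $H$ is bounded, but still requires a short trace-norm argument) and choose the ONB of $\mathcal K$ compatibly with the spectral decomposition of $\omega$ so that the partial-trace sum and the spectral sum share the same index, collapsing cleanly via $W_m^*W_l=\delta_{lm}\mathbbm{1}_n$. The reduction from $r\cdot\dim\mathcal K$ to $n^2$ operators is a short appeal to the GKSL classification rather than a new computation.
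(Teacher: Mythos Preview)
Your proposal is correct and follows essentially the same route as the paper. Your maps $W_m=\mathbbm1\otimes|m\rangle$ are exactly the paper's $\iota_{g_m}$, your $V_{lm}=\sqrt{2p_m}\,W_l^*HW_m$ coincide with the paper's $V_{jk}=\sqrt{2r_k}\operatorname{tr}_{|g_k\rangle\langle g_j|}(H)$, and your resolution-of-identity computation is the content of the paper's Lemma~\ref{lemma_partial_trace}; for the $n^2$ bound the paper argues via the Kraus rank of the completely positive map $A\mapsto 2\operatorname{tr}_{\mathcal K}(H(A\otimes\omega)H)$ rather than the Kossakowski matrix, but this is the same reduction in different clothing, and your construction in part~(ii) is identical to the paper's.
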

\begin{proof}[Proof idea]
We only sketch the ideas; the full proof is given in Appendix~\appref{C}.

(i): Like in the proof of Theorem~\ref{thm_main} one computes that the second derivative of $\Phi_t$ at zero equals 
$
-\operatorname{tr}_\omega(H^2)(\cdot) -(\cdot)\operatorname{tr}_\omega(H^2)+
2 \operatorname{tr}_{\mathcal K}\big( H  ((\cdot)\otimes\omega) H \big)
$.
This translates to $\ddot\Phi_0=-\sum_{(j,k)\in J\times N}{\bf\Gamma}_{V_{jk}}$ with
$V_{jk}:=\sqrt{2r_k}\operatorname{tr}_{|g_k\rangle\langle g_j|}(H)$ for all $j\in J$, $k\in N$
which, as a side note, resembles the Kraus operators of type~I Stinespring curves, cf.~\cite[Eq.~(3.42)]{BreuPetr02}.
Here $\sum_{k\in N}r_k|g_k\rangle\langle g_k|$ is any decomposition of the ancilla state $\omega$ with $N\subseteq\mathbb N$, $r_k>0$, and $\{g_k\}_{n\in\mathbb N}$ is an orthonormal system in $\mathcal K$ which we complete to an orthonormal basis $\{g_j\}_{j\in J}$ of $\mathcal K$.
In particular $\ddot\Phi_0$ can be written as the sum of at most $|J\times N|=\operatorname{rk}(\omega)\operatorname{dim}(\mathcal K)$ dissipative terms.
On the other hand the above expression for $\ddot\Phi_0$ yields $\Psi\in\mathcal L(\mathbb C^{n\times n})$ completely positive such that 
$\ddot\Phi_0=\Psi-\frac12\Psi^*(\mathbbm1)(\cdot)-(\cdot)\frac12\Psi^*(\mathbbm1)$.
In particular $\Psi$ admits Kraus operators $\{V_j\}_{j=1}^\ell$, $\ell\leq n^2$ which implies $\ddot\Phi_0=-\sum_{j=1}^\ell{\bf\Gamma}_{V_j}$. Thus $\ddot\Phi_0$ can always be decomposed into at most $n^2$ dissipative terms.

(ii): We adapt the strategy commonly used to convert the Kraus representation into a Stinespring dilation,
i.e.~collecting all $\{V_j\}_{j\in J}$ into one ``column'' of a bigger matrix (cf.~also Appendix~\appref{A}).
More precisely define
\begin{equation}\label{eq:H_construction}
H:=\frac{1}{\sqrt2}\sum_{j=1}^{m-1}\big(V_j\otimes|e_{j+1}\rangle\langle e_1| + V_j^*\otimes|e_{1}\rangle\langle e_{j+1}|\big)\in\mathbb C^{n\times n}\otimes\mathbb C^{m\times m}
\end{equation}
where $m:=|J|+1$.
Obviously $H$ is a Hermitian matrix so -- in addition to setting $\omega:=|e_1\rangle\langle e_1|$ -- it generates a type~I Stinespring curve. 
Now a straightforward computation shows that this curve satisfies \eqref{eq:stinespring_second_der}.
\end{proof}

Another way to view this result is that generators of quantum-dynamical semigroups can be characterized via type~I Stinespring curves (resp.~their second derivative at zero). 
After all, one needs not all differentiable but only Stinespring curves to approximate any type of Markovian dynamics close to the identity.
As a corollary, combining (the proofs of) our two main theorems one for all Hamiltonians $H\in\mathcal B(\mathbb C^n\otimes\mathcal K)$ and all states $\omega=\sum_{k\in N}r_k|g_k\rangle\langle g_k|$ on $\mathcal K$ finds that
\begin{equation}\label{eq:markov_taylor}
\begin{split}
\operatorname{tr}_{\mathcal K}\big( e^{iHt}(&(\cdot)\otimes\omega)e^{-iHt}\big)\equiv\\
&\equiv\operatorname{id}{}+ it[\operatorname{tr}_\omega(H),\,\cdot\,]-\frac{t^2}{2}\sum_{(j,k)\in J\times N}{\bf\Gamma}_{\sqrt{2r_k}\operatorname{tr}_{|g_k\rangle\langle g_j|}(H)}+\mathcal O(t^3)\,.
\end{split}
\end{equation}
Specifically, given $H_0\in\mathbb C^{n\times n}$ Hermitian and $\{V_j\}_{j=1}^m\subset\mathbb C^{n\times n}$ for some $m\in\mathbb N$, subtracting $H_0\otimes|e_1\rangle\langle e_1|$ from \eqref{eq:H_construction} yields a type~I Stinespring curve $(\Phi_t)_{t\geq 0}$ with first derivative $-i\operatorname{ad}_{H_0}$ and second derivative $-\sum_{j=1}^m{ \bf\Gamma}_{V_j}$ at zero, i.e.
$$
\Phi_t\equiv\operatorname{id}{}-it[H_0,\,\cdot\,]-\frac{t^2}{2}\sum_{j=1}^m{\bf\Gamma}_{V_j}+\mathcal O(t^3)\,.
$$
Finally, it should come at no surprise that this construction is highly non-unique:
given a Hamiltonian $H$ and a set of generating $V_j$'s there can in fact exist uncountably many different Stinespring curves which have $-i\operatorname{ad}_{H_0}$ as their first and $-\sum_j{\bf\Gamma}_{V_j}$ as their second derivative.
This is best illustrated by means of an easy example:

\section{A Qubit Example}\label{sec_example}

Let us consider a simple qubit dephasing process
$\Phi_t:=e^{tL}$ generated by a single Lindblad operator $V=\operatorname{diag}(0,1)$ via
\eqref{eq:lindblad_V}, that is, $L=-{\bf\Gamma}_V$
and
$$
\Phi_t(A)=
\Phi_t\begin{pmatrix}
a_{11}&a_{12}\\
a_{21}&a_{22}
\end{pmatrix}=\begin{pmatrix}
a_{11}&e^{-\frac{t}2}a_{12}\\
e^{-\frac{t}2}a_{21}&a_{22}
\end{pmatrix}\,.
$$
Our goal now is to find type~I Stinespring curves the second derivative of which reproduces the generator of $\Phi_t$.
We start with the following Ansatz:
\begin{align*}
\Psi:[0,\infty)&\to\textsc{cptp}(n)\\
A&\mapsto\operatorname{tr}_{\mathbb C^m}\big(e^{iHt}(A\otimes |e_1\rangle\langle e_1|)e^{-iHt}\big)\,,
\end{align*}
that is, the finite-dimensional ancilla is in the pure state $|e_1\rangle\langle e_1|$, but $m\in\mathbb N$ as well as the Hamiltonian $H\in\mathbb C^{2m\times 2m}$ are arbitrary for now.
The proof of Theorem~\ref{thm_main_2} shows that $\ddot\Psi(0)=-\sum_{j=1}^m {\bf\Gamma}_{V_j}$ with $V_j=\sqrt{2}\operatorname{tr}_{|e_1\rangle\langle e_j|}(H)$ for all $j=1,\ldots,m$ \footnote{
Actually one could replace $e_j$ by any $g_j$ as long as $\{g_j\}_{j=1}^m$ is an orthonormal basis of $\mathbb C^m$ with $g_1=e_1$. However, we stick to the standard basis for simplicity.
}.
Because we only have one Lindblad operator $V$ which is even Hermitian
the simplest choice is $m=1$ and $H=\frac{1}{\sqrt2}V$ so
\begin{equation}\label{eq:psi_t_1}
\Psi_t\begin{pmatrix}
a_{11}&a_{12}\\
a_{21}&a_{22}
\end{pmatrix}=e^{\frac{iVt}{\sqrt2}}\begin{pmatrix}
a_{11}&a_{12}\\
a_{21}&a_{22}
\end{pmatrix}e^{-\frac{iVt}{\sqrt2}}=\begin{pmatrix}
a_{11}&e^{-\frac{it}{\sqrt2}}a_{12}\\
e^{\frac{it}{\sqrt2}}a_{21}&a_{22}
\end{pmatrix}\,.
\end{equation}
This curve has the desired properties because
$$
\ddot\Psi_0\begin{pmatrix}
a_{11}&a_{12}\\
a_{21}&a_{22}
\end{pmatrix}=\begin{pmatrix}
0&-\frac12a_{12}\\
-\frac12a_{21}&0
\end{pmatrix}=\ddot\Phi_0\begin{pmatrix}
a_{11}&a_{12}\\
a_{21}&a_{22}
\end{pmatrix}\,.
$$
However, the first derivative of $\Psi_t$ does not yet match the Hamiltonian part of our initial $L$.
We can fix this by setting $m=2$;
then we get two dissipative terms generated by $V_1=\sqrt{2}\operatorname{tr}_{|e_1\rangle\langle e_1|}(H)$ and $V_2=\sqrt{2}\operatorname{tr}_{|e_1\rangle\langle e_2|}(H)$.
We can either choose $V_1=V$ and $V_2=0$ (which again yields \eqref{eq:psi_t_1}) or we can choose $V_1=0$, $V_2=V$.
A direct computation shows that for the latter case $H$ has to be of the form
\begin{equation}\label{eq:ex_H}
H=\begin{pmatrix}
0&0&0&0\\
0&a&0&b\\
0&0&0&\frac{1}{\sqrt{2}}\\
0&b^*&\frac{1}{\sqrt{2}}&c
\end{pmatrix}
\end{equation}
where $a,c\in\mathbb R$, $b\in\mathbb C$ can be arbitrary. These parameters may yield different overall curves, but they do not affect the first and second derivative of $\Psi_t$ at zero.
For example choosing $a=b=c=0$ yields
\begin{align}
\Psi_t(A)
&=\operatorname{tr}_{\mathcal K}\begin{pmatrix}
a_{11}&0&\cos(\frac{t}{\sqrt2})a_{12}&-i\sin(\frac{t}{\sqrt2})a_{12}\\
0&0&0&0\\
\cos(\frac{t}{\sqrt2})a_{21}&0&(\cos(\frac{t}{\sqrt2}))^2a_{22}&-\frac{i}2\sin(\sqrt2t)a_{22}\\
i\sin(\frac{t}{\sqrt2})a_{21}&0&\frac{i}2\sin(\sqrt2t)a_{22}&(\sin(\frac{t}{\sqrt2}))^2a_{22}
\end{pmatrix}\nonumber\\
&=\begin{pmatrix}
a_{11}&a_{12}\cos(\frac{t}{\sqrt2})\\
a_{21}\cos(\frac{t}{\sqrt2})&a_{22}
\end{pmatrix}\label{eq:psi_t_2}
\end{align}
which obviously has the correct first and second derivative at zero.
At this point, a few remarks are in order:
\begin{itemize}
\item It is quite easy to see that this particular $\Psi$ cannot be of \textsc{gksl}-form simply because $\Psi_t$ fails to be bijective whenever $t=\frac{\pi}{\sqrt2}+\sqrt2\pi k$ for some $k\in\mathbb N_0$. Moreover $\Psi_t$ is \textsc{p}-divisible (cf.~\cite{Plenio_NMarkov_Review2014}) for $0\leq t\leq\frac{\pi}{\sqrt2}$ but fails to be \textsc{p}-divisible for $\frac{\pi}{\sqrt2}<t<\sqrt2\pi$, etc.~which further showcases its non-Markovian behavior, refer also to \cite{ChruKossRivas11}.
\item The previous observation is an incarnation of the quantum recurrence theorem \cite{Bocchieri57,Schulman78} (cf.~also \cite{wallace2015recurrence,keyl18InfLie}): a one-parameter unitary group in finite dimensions revisits any given point in time either exactly or at least norm-approximately.
This is another hint as to why Stinespring curves with finite-dimensional environments cannot model Markovian open system interactions: because 
$e^{iHt}$ -- and thus the induced Stinespring curve -- eventually revisits the identity,
the dissipative effects have to be reversed at some point. But this violates Markovianity. 
\item From \eqref{eq:ex_H} 
we can easily construct another curve which differs from \eqref{eq:psi_t_2} only in the third (and any higher) derivative. For this choose $a=c=0$ and $b=\frac{1}{\sqrt2}$ in \eqref{eq:ex_H}; then $\Psi_t(A)$ equals
\begin{align*}
\begin{pmatrix}
a_{11}+a_{22}(\sin(\frac{t}{2}))^4&a_{12}(\cos(\frac{t}{2}))^2 -\frac{ i\sin(t)(\cos(t)-1) }{2\sqrt2} a_{22} \\
a_{21}(\cos(\frac{t}{2}))^2 +\frac{ i\sin(t)(\cos(t)-1) }{2\sqrt2}a_{22}&a_{22}(1-(\sin(\frac{t}{2})^4)
\end{pmatrix}\,,
\end{align*}
but
$$
\dddot\Psi_0(A)=\begin{pmatrix}
0&\frac{3i}{2\sqrt2}a_{22}\\
-\frac{3i}{2\sqrt2}a_{22}&0 \end{pmatrix}
$$
while the third derivative of \eqref{eq:psi_t_2} vanishes at zero.
This also showcases non-uniqueness of our construction: each value of $b$ yields a different curve\footnote{
For $a=c=0$ and general $b\in\mathbb R$ one finds $\dddot\Psi_0(A)=-\frac{3}{2}b\sigma_y\cdot a_{22}$.
}
which nonetheless has the same first and second derivative at zero.
\end{itemize}
\section{Conclusions and Outlook}
We investigated how a dynamical version of Stinespring's dilation theorem interacts with the notion of quantum-dynamical semigroups.
In doing so we found that if a Stinespring curve which models non-trivial environment interaction at the same time is (time-independent) Markovian,
then the dynamics of system plus environment have to be generated by an unbounded---and thus infinite-dimensional---Hamiltonian.
This complements known results about how bounded Hamiltonians can only generate sub-exponential decay.
Moreover we gave an explicit -- albeit non-unique -- construction for converting the dissipative part of a \textsc{qds} into (the second derivative of) a Stinespring curve, and we showed that every \textsc{qds} arises this way.

From here there are two rather obvious directions to pursue:
First, based on our improved understanding of how \textsc{qds} and Markovianity interact one may attempt to generalize Davies' existence result \cite{Davies72} from finite dimensions to norm-continuous (or even strongly continuous) \textsc{qds} in infinite dimensions.
Second, Stinespring's dilation theorem as well as the notion of Stinespring curves
are essential for quantum thermodynamics: they are used to define the so-called thermal operations which are the fundamental building block of the resource theory approach to quantum thermodynamics \cite{Lostaglio19,vomEnde22thermal}.
While studying the intersection between Markovianity and quantum thermodynamics is a quite recent field \cite{LosKor22a}
our results suggest that already the definition of thermal operations
holds valuable insights in this direction.
Indeed we will explore this in future work \cite{OSID22}.

\section*{Acknowledgments}
I would like to thank
Gunther Dirr
for valuable and constructive comments during the preparation of this manuscript,
as well as Daniel Burgarth for 
drawing my attention to recent publications on dilations of quantum-dynamical semigroups.
This research is part of the Bavarian excellence network \textsc{enb}
via the International PhD Programme of Excellence
\textit{Exploring Quantum Matter} (\textsc{exqm}), as well as the \textit{Munich Quantum Valley} of the Bavarian
State Government with funds from Hightech Agenda \textit{Bayern Plus}.

\section*{Appendix \app{A}: Relation Between Choi-Jamio\l{}kowski, Kraus, and Stinespring}

Given $m,n\in\mathbb N$ and $\Phi:\mathbb C^{n\times n}\to\mathbb C^{m\times m}$ linear we say $\Phi$ is completely positive if $\Phi\otimes\operatorname{id}_k$ (or $\operatorname{id}_k\otimes\Phi$) for all $k\in\mathbb N$ maps positive semi-definite matrices to positive semi-definite matrices. Equivalently, the Choi matrix
$$
C(\Phi)=\sum_{j,k=1}^n|e_j\rangle\langle e_k|\otimes\Phi(|e_j\rangle\langle e_k|)\in\mathbb C^{mn\times mn}$$
is positive semi-definite \cite[Thm.~1]{Choi75}, and Choi's proof explicitly constructs a set of Kraus operators:
first decompose $C(\Phi)=\sum_i|\psi_i\rangle\langle\psi_i|$ into pairwise orthogonal vectors $\{\psi_i\}_i\subset\mathbb C^n\otimes\mathbb C^m$, then set $K_i:=\operatorname{vec}^{-1}(\psi_i)$ where
\begin{align*}
\operatorname{vec}:\mathbb C^{m\times n}&\to\mathbb C^n\otimes\mathbb C^m\\
X&\mapsto\sum_{i=1}^ne_i\otimes Xe_i
\end{align*}
is the vectorization isomorphism \cite[Ch.~2.4]{MN07} and its inverse is given by $\operatorname{vec}^{-1}(\psi)=(\langle e_j\otimes e_i,\psi\rangle)_{i=1,j=1}^{m,n}$.
This also shows that if, conversely, $\{K_i\}_{i}$ is an arbitrary set of Kraus operators of $\Phi$, i.e.~$\Phi\equiv\sum_iK_i(\cdot)K_i^*$, then $C(\Phi)=\sum_i\operatorname{vec}(K_i)\operatorname{vec}(K_i)^*$.\medskip

Now for the Stinespring dilation. In its original form (assuming finite-dimensions and working in the Schr\"odinger picture), given a linear, completely positive map $\Phi:\mathbb C^{n\times n}\to\mathbb C^{m\times m}$
it states that there exists $\ell\in\mathbb N$ and $U_0:\mathbb C^m\to\mathbb C^n\otimes\mathbb C^\ell$ (that is, $U_0\in\mathbb C^{n\ell\times m}$) such that \cite[Thm.~6.9]{Holevo12}
\begin{equation}\label{eq:stinespring_1}
\Phi\equiv\operatorname{tr}_{\mathbb C^\ell}\big(U_0(\cdot)U_0^*)
\end{equation}
with $\operatorname{tr}_{\mathbb C^\ell}:\mathbb C^{n\times n}\otimes\mathbb C^{\ell\times\ell}\to\mathbb C^{n\times n}$ being the usual partial trace.
Starting from a set of Kraus operators $\{K_i\}_{i=1}^\ell \subset\mathbb C^{m\times n}$ of $\Phi$ define
$U_0:\mathbb C^m\to\mathbb C^n\otimes\mathbb C^\ell$ via
$U_0x:=\sum_{i=1}^\ell (K_ix)\otimes e_i$ \cite[Eq.~(6.20) ff.]{Holevo12} because then
\begin{align*}
\operatorname{tr}_{\mathbb C^\ell}\big(U_0|x\rangle\langle y|U_0^*)&=
\operatorname{tr}_{\mathbb C^\ell}\big(|U_0x\rangle\langle U_0y|)=\sum_{i,j=1}^\ell\operatorname{tr}_{\mathbb C^\ell}\big( K_i|x\rangle\langle y|K_j^*\otimes |e_i\rangle\langle e_j| )\\
&=\sum_{i,j=1}^\ell\langle e_j,e_i\rangle K_i|x\rangle\langle y|K_j^*=\sum_{i=1}^\ell K_i|x\rangle\langle y|K_i^*=\Phi(|x\rangle\langle y|)
\end{align*}
for all $x,y\in\mathbb C^n$.
This shows \eqref{eq:stinespring_1} because the rank-1 operators span $\mathbb C^{n\times n}$.
Conversely if $\Phi$ is of the form \eqref{eq:stinespring_1}, then $\{K_i\}_{i=1}^\ell$ is a set of Kraus operators of $\Phi$ where $K_iy:=\sum_{j=1}^n\langle e_j\otimes e_i,U_0y\rangle e_j\in\mathbb C^m$ for all $y\in\mathbb C^n$, $i=1,\ldots,\ell$.

Now this theorem takes a special form if $\Phi$ additionally is trace-preserving;
this is precisely when the operator $U_0$ in \eqref{eq:stinespring_1} is an isometry \cite[Thm.~6.9]{Holevo12}.
In this case $\Phi$ ``can be extended to the evolution of an open system interacting with an environment'' \cite[Thm.~6.18]{Holevo12}.
More precisely let $\Phi:\mathbb C^{n\times n}\to\mathbb C^{m\times m}$ be completely positive and trace-preserving, and let $d:=\operatorname{lcm}(m,n)$ be the least common multiple of $m$ and $n$. Then this version of Stinespring's theorem (cf.~\cite[Coro.~1]{vE_dirr_semigroups}) guarantees the existence of a number $\ell\in\mathbb N$ and a unitary matrix $U\in\mathbb C^{d\ell\times d\ell}$
such that
\begin{equation}\label{eq:Stinespring_cptp}
\Phi\equiv\operatorname{tr}_{\mathbb C^{d\ell/m}}\big(U((\cdot)\otimes|e_1\rangle\langle e_1|)U^*\big)\,.
\end{equation}
Note that if $m=n$, then $d=m=n$ so this result reproduces the common formulation of Stinespring's theorem for quantum maps \cite[Thm.~6.18]{Holevo12}.
As for the construction:
starting again from a set of Kraus operators $\{K_i\}_{i=1}^{\ell'}$ of $\Phi$, constructing $U$ amounts to collecting the $K_i$ ``in the first column'' of a larger matrix and filling up the rest of $U$ such that it becomes unitary.
More precisely \cite[Eq.~(6.23)]{Holevo12} we define $\ell:=\ell'\cdot\frac{d}{n}$
and
\begin{align*}
U_0:= \sum_{i=1}^{\ell'}\sum_{j=1}^{\frac{d}{n}} &K_i  \otimes|e_1\rangle\langle e_j|\otimes|e_i\rangle\langle e_1|\otimes|e_j\rangle\langle e_1|\\
\in\;&\mathbb C^{m\times n} \otimes\mathbb C^{\frac{d}{m}\times\frac{d}{n}}\otimes\mathbb C^{\ell'\times\ell'}\otimes\mathbb C^{\frac{d}{n}\times\frac{d}{n}}
\simeq
\mathbb C^{d\ell\times d\ell}
\,,
\end{align*}
and notice that
\begin{align*}
U_0^*U_0&=\sum_{i=1}^{\ell'}\sum_{j=1}^{\frac{d}{n}}K_i^*K_i\otimes|e_j\rangle\langle e_j|\otimes|e_1\rangle\langle e_1|\otimes|e_1\rangle\langle e_1|\\
&=\Big(\sum_{i=1}^{\ell'}K_i^*K_i\Big)\otimes  \mathbbm1_{\frac{d}{n}}\otimes|e_1\rangle\langle e_1|=\mathbbm 1_d\otimes|e_1\rangle\langle e_1|\in\mathbb C^{d\times d}\otimes\mathbb C^{\ell\times\ell}
\end{align*}
as the Kraus operators of $\Phi$ satisfy $\sum_{i=1}^{\ell}K_i^*K_i=\mathbbm 1_n$ due to trace-preservation \cite[Coro.~6.13]{Holevo12}.
In other words the ``first $d$ columns'' of $U_0$ form an orthonormal system in $\mathbb C^{d\ell}$ which can be completed to an orthonormal basis of $\mathbb C^{d\ell}$. We fill up the remaining ``columns'' of $U_0$ with these additional vectors to obtain a unitary matrix\footnote{
More precisely we arrange the new vectors into matrices $U_{ii'jj'}\in\mathbb C^{d\times d}$ with indices $i,i'=1,\ldots,\ell'$, $j,j'=1,\ldots,\frac{d}{n}$ -- but $(i',j')\neq(1,1)$ -- such that
\begin{align*}
U= \sum_{i=1}^{\ell'}\sum_{j=1}^{\frac{d}{n}} K_i  \otimes|e_1\rangle\langle e_j|\otimes|e_i\rangle\langle e_1|\otimes|e_j\rangle\langle e_1|+ \sum_{i,i'=1}^{\ell'}\sum_{\substack{j,j'=1\\(i',j')\neq(1,1)}}^{\frac{d}{n}} U_{ii'jj'}  \otimes|e_i\rangle\langle e_{i'}|\otimes|e_j\rangle\langle e_{j'}|
\end{align*}
is unitary.
} $U\in\mathbb C^{d\ell\times d\ell}$.
Now \eqref{eq:Stinespring_cptp} holds because for all $A\in\mathbb C^{n\times n}$, $B\in\mathbb C^{m\times m}$
\begin{align*}
\operatorname{tr}\big(B &\operatorname{tr}_{\mathbb C^{d\ell/m}}\big(U(A\otimes|e_1\rangle\langle e_1|)U^*\big) \big)\\
&=\operatorname{tr}\big((B\otimes\mathbbm1_{\frac{d}{m}}\otimes\mathbbm1_\ell) U(A\otimes|e_1\rangle\langle e_1|\otimes|e_1\rangle\langle e_1|)U^* \big)\\
&=\sum_{i=1}^{\ell'}\sum_{j=1}^{\frac{d}{n}}\operatorname{tr}\big( (B\otimes\mathbbm1_{\frac{d}{m}})(K_i\otimes|e_1\rangle\langle e_j|)(A\otimes|e_1\rangle\langle e_1|)(K_{i}^*\otimes|e_{j}\rangle\langle e_1|) \big)
\\
&= \sum_{i=1}^{\ell'}\operatorname{tr}(BK_iAK_i^*)\sum_{j=1}^{\frac{d}{n}}\langle e_j,e_1\rangle=\operatorname{tr}(B\Phi(A))\,.
\end{align*}
This procedure is reversible, as well: starting from \eqref{eq:Stinespring_cptp}, first
decompose $U$ into ``blocks'' via
$
U=\sum_{i,j=1}^{\ell}U_{ij}\otimes|e_i\rangle\langle e_j|
$
with $U_{ij}\in\mathbb C^{d\times d}$.
Then a set of Kraus operators of $\Phi$ is given by $\{K_{ij}\}_{i=1,j=1}^{(d/m),\ell}\subset\mathbb C^{m\times n}$ where
$K_{ij}y:=\sum_{k=1}^m\langle e_k\otimes e_i,U_{j1}(y\otimes e_1)\rangle e_k
$
for all $i=1,\ldots,\frac{d}{m}$, $j=1,\ldots,\ell$, and all $y\in\mathbb C^n$.
\section*{Appendix \app{B}: Proof of Theorem~\ref{thm_main}}
First we need a result about derivatives of generalized Stinespring curves:
\begin{lemma}\label{lemma_der_stinespring_general}
Given any complex Hilbert space $\mathcal K$, any operator $B\in\mathcal B(\mathbb C^n\otimes\mathcal K)$, and any bounded linear map $E:\mathbb C^{n\times n}\to\mathcal B^1(\mathbb C^n\otimes\mathcal K)$
one has
\begin{align*}
\frac{d}{dt}\operatorname{tr}_{\mathcal K}\big( &e^{Bt} E(\cdot)e^{-Bt} \big)
=\operatorname{tr}_{\mathcal K}\big( e^{Bt} B E(\cdot)e^{-Bt} \big)+\operatorname{tr}_{\mathcal K}\big( e^{Bt} E(\cdot)(-B)e^{-Bt} \big)\,.
\end{align*}
\end{lemma}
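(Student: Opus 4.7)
The plan is to verify the product rule for the inner expression $t\mapsto e^{Bt}E(X)e^{-Bt}$ in the trace norm for each fixed $X\in\mathbb C^{n\times n}$, and then push the limit through the continuous partial trace $\operatorname{tr}_{\mathcal K}$. Since $\mathbb C^{n\times n}$ is finite-dimensional, it suffices to work pointwise in $X$; pointwise differentiability will then automatically yield differentiability with respect to $\|\cdot\|_{1\to 1}$.

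First I would recall that because $B$ is bounded, the power-series definition of $e^{Bt}$ immediately gives that $t\mapsto e^{Bt}$ and $t\mapsto e^{-Bt}$ are continuously differentiable from $\mathbb R$ into $\mathcal B(\mathbb C^n\otimes\mathcal K)$ in the operator-norm topology, with derivatives $Be^{Bt}=e^{Bt}B$ and $-Be^{-Bt}=-e^{-Bt}B$, respectively. Next I would write the usual telescoping decomposition of the difference quotient,
\begin{align*}
&\frac{e^{B(t+h)}E(X)e^{-B(t+h)}-e^{Bt}E(X)e^{-Bt}}{h}\\
&\qquad=\frac{e^{B(t+h)}-e^{Bt}}{h}\,E(X)\,e^{-B(t+h)}+e^{Bt}\,E(X)\,\frac{e^{-B(t+h)}-e^{-Bt}}{h}\,.
\end{align*}

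The key observation is that the trace class is a two-sided ideal in the bounded operators with $\|AC\|_1\leq\|A\|_\infty\|C\|_1$ and $\|CA\|_1\leq\|A\|_\infty\|C\|_1$, so left or right multiplication by a bounded operator is continuous on $\mathcal B^1(\mathbb C^n\otimes\mathcal K)$. Applying this with $C=E(X)\in\mathcal B^1$ and combining with the operator-norm convergences from the previous step, each of the two summands above converges in trace norm: the first to $Be^{Bt}E(X)e^{-Bt}=e^{Bt}BE(X)e^{-Bt}$ and the second to $e^{Bt}E(X)(-B)e^{-Bt}$. Note that continuity of the ``other'' factor in each term (namely $e^{-B(t+h)}\to e^{-Bt}$ and the static $e^{Bt}$) is what allows passing the limit into the product.

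Finally I would invoke that $\operatorname{tr}_{\mathcal K}:\mathcal B^1(\mathbb C^n\otimes\mathcal K)\to\mathbb C^{n\times n}$ is a bounded linear map, hence continuous, and therefore commutes with the trace-norm limit established above. This delivers the claimed formula. The main obstacle is the careful bookkeeping of norms---making sure that every difference quotient is controlled in the \emph{trace} norm (not merely the operator norm) so that the subsequent application of $\operatorname{tr}_{\mathcal K}$ is legitimate; once the ideal estimate $\|\cdot\|_1\leq\|\cdot\|_\infty\|\cdot\|_1$ is in hand, the argument reduces to a routine product rule.
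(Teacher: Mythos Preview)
Your proposal is correct and follows essentially the same route as the paper: both arguments telescope the difference quotient, invoke the ideal estimate $\|XYZ\|_1\leq\|X\|_\infty\|Y\|_1\|Z\|_\infty$ together with norm-differentiability of $t\mapsto e^{\pm Bt}$ (which holds because $B$ is bounded), and then push the limit through the continuous partial trace. The only cosmetic differences are that the paper works directly in the $\|\cdot\|_{1\to 1}$ norm using a three-term splitting, whereas you use the cleaner two-term telescoping pointwise in $X$ and then appeal to finite-dimensionality of $\mathbb C^{n\times n}$ to upgrade to operator-norm convergence; neither variation changes the substance of the argument.
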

\begin{proof}
To ease notation we introduce the bilinear map
\begin{align*}
\Lambda:\mathcal B(\mathbb C^n\otimes\mathcal K)\times\mathcal B(\mathbb C^n\otimes\mathcal K)&\to\mathcal (\mathbb C^{n\times n}\to\mathbb C^{n\times n})\\
(B_1,B_2)&\mapsto \operatorname{tr}_{\mathcal K}\big(B_1E(\cdot)B_2\big)\,.
\end{align*}
so we have to show that
\begin{align*}
\lim_{h\to 0}\Big\|  &\frac{\Lambda(e^{B(t+h)},e^{-B(t+h)})-\Lambda(e^{Bt},e^{-Bt})}{h} \\
&\qquad \quad \qquad -\Lambda(e^{Bt}B,e^{-Bt})-\Lambda(e^{Bt},-Be^{-Bt})  \Big\|_{1\to 1}=0\,.
\end{align*}
The expression inside the norm can be re-written as
\begin{align}
&\tfrac{\Lambda(e^{B(t+h)},e^{-B(t+h)})\mp\Lambda(e^{Bt},e^{-B(t+h)})-\Lambda(e^{Bt},e^{-Bt})}{h\nonumber}\\
&\quad \pm \Lambda( e^{Bt}B , e^{-B(t+h)} )-\Lambda(e^{Bt}B,e^{-Bt})-\Lambda(e^{Bt},-Be^{-Bt}) \nonumber\\
&\ =\frac{\Lambda(e^{B(t+h)},e^{-B(t+h)})-\Lambda(e^{Bt},e^{-B(t+h)})}{h}- \Lambda( e^{Bt}B , e^{-B(t+h)} )\label{lemma_diff_eq:1}\\
&\ +\Lambda( e^{Bt}B , e^{-B(t+h)} )- \Lambda(e^{Bt}B,e^{-Bt}) \label{lemma_diff_eq:2}\\
&\ +\frac{\Lambda(e^{Bt},e^{-B(t+h)})-\Lambda(e^{Bt},e^{-Bt})}{h}-\Lambda(e^{Bt},-Be^{-Bt})\,.\label{lemma_diff_eq:3}
\end{align}
Bounding each of these three differences is straightforward: 
for \eqref{lemma_diff_eq:1} one finds
\begin{align*}
&\Big\|\frac{\Lambda(e^{B(t+h)},e^{-B(t+h)})-\Lambda(e^{Bt},e^{-B(t+h)})}{h}- \Lambda( e^{Bt}B , e^{-B(t+h)} )\Big\|_{1\to 1}\\
&\qquad\qquad\qquad\qquad=\Big\|\Lambda\Big(\frac{e^{B(t+h)}-e^{Bt}}h- e^{Bt}B, e^{-B(t+h)} \Big)\Big\|_{1\to 1}\\
&\qquad\qquad\qquad\qquad\leq\Big\|\Big(\frac{e^{B(t+h)}-e^{Bt}}h- e^{Bt}B\Big)E(\cdot)e^{-B(t+h)}\Big\|_{1\to 1}\\
&\qquad\qquad\qquad\qquad\leq \Big\|\frac{e^{B(t+h)}-e^{Bt}}h- e^{Bt}B\Big\|_\infty\|E\|_{1\to 1}\|e^{-B(t+h)}\|_\infty\\
&\qquad\qquad\qquad\qquad\leq\|e^{Bt}\|_\infty\Big\|\frac{e^{Bh}-\mathbbm1}{h}-B\Big\|_\infty\|E\|_{1\to 1}\|e^{-Bt}\|_\infty\|e^{-Bh}\|_\infty
\,.
\end{align*}
In the third line we used that every \textsc{cptp} map between trace classes -- which includes the partial trace -- has operator norm one. This is a well-known consequence of the Russo-Dye theorem \cite{Russo66}, cf.~also \cite[Prop.~2]{vE_dirr_semigroups}.
Then in the fourth and fifth line we used that $\|XYZ\|_1\leq\|X\|_\infty\|Y\|_1\|Z\|_\infty$ for suitable objects $X,Y,Z$ \cite[Lemma 16.6.6]{MeiseVogt97en}.
Similarly,
\eqref{lemma_diff_eq:3} is upper bounded in norm by $\|e^{Bt}\|_\infty\|E\|_{1\to 1}\|\frac{e^{-Bh}-\mathbbm1}{h}+B\|_\infty\|e^{-Bt}\|_\infty\,$,
and for
\eqref{lemma_diff_eq:2}
we compute
\begin{align*}
\big\|\Lambda( e^{Bt}B , e^{-B(t+h)} )&- \Lambda(e^{Bt}B,e^{-Bt}) \big\|_{1\to 1}\\
&\leq \|e^{Bt}\|_\infty\|B\|_\infty\|E\|_{1\to 1}\|e^{-Bt}\|_\infty\|e^{-Bh}-\mathbbm1\|_\infty\,.
\end{align*}
Combining these estimates yields
\begin{align*}
\Big\|  &\frac{\Lambda(e^{B(t+h)},e^{-B(t+h)})-\Lambda(e^{Bt},e^{-Bt})}{h} \\
&\qquad \quad \qquad\qquad  -\Lambda(e^{Bt}B,e^{-Bt})-\Lambda(e^{Bt},-Be^{-Bt})  \Big\|_{1\to 1}\\
&\leq\Big( \Big\|\frac{e^{Bh}-\mathbbm1}{h}-B\Big\|_\infty\|e^{-Bh}\|_\infty+\Big\|\frac{e^{-Bh}-\mathbbm1}{h}+B\Big\|_\infty\\
&\qquad \quad \qquad\qquad +\|B\|_\infty\|e^{-Bh}-\mathbbm1\|_\infty \Big) \|e^{Bt}\|_\infty\|e^{-Bt}\|_\infty\|E\|_{1\to 1}\,.
\end{align*}
But this expression vanishes as $h\to 0$ because $E$ is bounded and because
$\frac{d}{dt}e^{\pm Bt}|_{t=0}=\pm B$ in norm due to $B$ being bounded.
\end{proof}
Now computing the derivative of $(\Phi_t)_{t\geq 0}$ at zero is a mere application of Lemma~\ref{lemma_der_stinespring_general}: choosing
$B=iH$ and $E(\cdot)=(\cdot)\otimes\omega$
we find that $\dot\Phi_0$
given by the map
$X\mapsto \operatorname{tr}_{\mathcal K}\big(iH(X\otimes\omega)\big)
+\operatorname{tr}_{\mathcal K}\big((X\otimes\omega)(-iH)\big)$ on
$\mathbb C^{n\times n}$. Note that this map is well defined because 
$X\otimes\omega$ is in $\mathbb C^{n\times n}\otimes\mathcal B^1(\mathcal K)
\simeq\mathcal B^1(\mathbb C^n\otimes\mathcal K)$ \cite[p.~34]{Kraus83}, hence 
the argument of $\operatorname{tr}_{\mathcal K}:\mathcal B^1(
\mathbb C^n\otimes\mathcal K)\to\mathcal B^1(\mathbb C^n)\simeq
\mathbb C^{n\times n}$ is trace class, as well.
All that is left is to cast the expression we obtained into a more familiar form:
\begin{lemma}\label{lemma_rewrite_trace_expression}
Given $n\in\mathbb N$, a Hilbert space $\mathcal K$, as well as $B\in\mathcal B(\mathbb C^n\otimes\mathcal K)$, $A\in\mathcal B^1(\mathcal K)$, and $X\in\mathbb C^{n\times n}$ the following statements hold.
\begin{itemize}
\item [(i)] Defining 
$\operatorname{tr}_A(B)$ as the unique $n\times n$ matrix which for all $X\in\mathbb C^{n\times n}$ satisfies
$
\operatorname{tr}( \operatorname{tr}_A(B)X )=\operatorname{tr}(B(X\otimes A)  )
$, called ``partial trace of $B$ with respect to $A$'', the map $\operatorname{tr}_A:\mathcal B(\mathbb C^n\otimes\mathcal K)\to\mathbb C^{n\times n}$, $B\mapsto\operatorname{tr}_A(B)$ is the dual of $\iota_A:\mathbb C^{n\times n}\to\mathcal B^1(\mathbb C^n\otimes\mathcal K)$, $X\mapsto X\otimes A$. Thus $\operatorname{tr}_A$ is well-defined, and is completely positive identity-preserving if and only if $A$ is a state.
\item[(ii)] $(\operatorname{tr}_A(B))^*=\operatorname{tr}_{A^*}(B^*)$
\item[(iii)]
One has
\begin{equation}\label{eq:lemma_rewr_trace_1}
\operatorname{tr}_{\mathcal K}\big(B(X\otimes A)\big)=\operatorname{tr}_ A(B)X
\end{equation}
and
\begin{equation}\label{eq:lemma_rewr_trace_2}
\operatorname{tr}_{\mathcal K}\big((X\otimes A)B\big)=X\operatorname{tr}_ A(B)\,.
\end{equation}
\item[(iv)] Given any orthonormal basis $\{g_j\}_{j=1}^n$ of $\mathbb C^n$ one has
\begin{equation*}
\operatorname{tr}_ A(X)
=\sum_{j,k=1}^n\operatorname{tr}( A X_{kj})|g_j\rangle\langle g_k|
\end{equation*}
where $X_{jk}$ are the ``blocks'' of $X$, i.e.~$X=\sum_{j,k=1}^n|g_j\rangle\langle g_k|\otimes X_{jk}$.
\end{itemize}
\end{lemma}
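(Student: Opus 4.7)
The strategy is to establish (i) first, since the duality framework there forces the definition and boundedness of $\operatorname{tr}_A$; items (ii)--(iv) then reduce to short trace manipulations built on top of (i).

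For (i), I recognise the defining identity $\operatorname{tr}(\operatorname{tr}_A(B)X)=\operatorname{tr}(B(X\otimes A))$ as saying precisely that $\operatorname{tr}_A$ is the Banach-space adjoint of $\iota_A:X\mapsto X\otimes A$, under the trace pairings $(\mathbb C^{n\times n})^*\simeq\mathbb C^{n\times n}$ and $\mathcal B^1(\mathbb C^n\otimes\mathcal K)^*\simeq\mathcal B(\mathbb C^n\otimes\mathcal K)$. The estimate $\|X\otimes A\|_1\leq C_n\|X\|\,\|A\|_1$, where $C_n$ absorbs the norm equivalence on the finite-dimensional $\mathbb C^{n\times n}$, makes $\iota_A$ bounded, hence its adjoint $\operatorname{tr}_A$ is well-defined as a bounded linear map into $\mathbb C^{n\times n}$. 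If $A\geq 0$ then $\iota_A$ is completely positive (tensor products of positive operators are positive), so dualising yields complete positivity of $\operatorname{tr}_A$; for the converse I test on rank-one positive operators $B=|\phi\rangle\langle\phi|\otimes|\psi\rangle\langle\psi|$, which via the defining identity forces $\langle\psi,A\psi\rangle\geq 0$ for every $\psi\in\mathcal K$. Finally, plugging $B=\mathbbm 1$ into the defining equation yields $\operatorname{tr}_A(\mathbbm 1)=\operatorname{tr}(A)\,\mathbbm 1_n$, so identity-preservation is equivalent to $\operatorname{tr}(A)=1$; combined with positivity this gives the ``state'' characterisation.

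Parts (ii)--(iv) are then formal consequences. For (ii), I complex-conjugate the defining equation and use $\operatorname{tr}(M^*N)=\overline{\operatorname{tr}(MN^*)}$ to obtain $\operatorname{tr}((\operatorname{tr}_A(B))^*X)=\operatorname{tr}(B^*(X\otimes A^*))=\operatorname{tr}(\operatorname{tr}_{A^*}(B^*)X)$ for all $X$, and conclude by uniqueness of the partial trace with respect to $A^*$. For (iii), I test both claimed identities against an arbitrary $Y\in\mathbb C^{n\times n}$ and apply cyclicity on the tensor-product space: $\operatorname{tr}((Y\otimes\mathbbm 1)B(X\otimes A))=\operatorname{tr}(B(XY\otimes A))=\operatorname{tr}(\operatorname{tr}_A(B)XY)=\operatorname{tr}(Y\operatorname{tr}_A(B)X)$, so \eqref{eq:lemma_rewr_trace_1} follows, and \eqref{eq:lemma_rewr_trace_2} is obtained symmetrically. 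For (iv), I expand $X=\sum_{l,m}|g_l\rangle\langle g_m|\otimes X_{lm}$ and read off the $(j,k)$ matrix entry of $\operatorname{tr}_A(X)$ by taking $Y=|g_k\rangle\langle g_j|$ in the defining relation of (i); the right-hand side immediately collapses to $\operatorname{tr}(AX_{jk})$.

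The main technical obstacle is the setup in (i): one must carefully use that $\iota_A$ lands in $\mathcal B^1(\mathbb C^n\otimes\mathcal K)$ (so that its dual maps $\mathcal B(\mathbb C^n\otimes\mathcal K)$ back into $\mathbb C^{n\times n}$, rather than into some larger bidual), and that this boundedness persists even when $\mathcal K$ is infinite-dimensional. Everything else is standard trace gymnastics on tensor products.
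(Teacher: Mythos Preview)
Your proof is correct and follows essentially the same route as the paper: part (i) via duality of $\iota_A$ (the paper just cites references here, so you actually supply more detail), part (ii) by conjugating the defining trace identity, part (iii) by testing against an arbitrary matrix and using cyclicity, and part (iv) by direct expansion (which the paper leaves as ``readily verified''). There is no substantive difference in strategy.
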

\begin{proof}
(i): See \cite[Ch.~9, Lemma 1.1]{Davies76} or \cite[Ch.~II.B]{vE_dirr_semigroups}.
(ii): For all $X\in\mathbb C^{n\times n}$
\begin{align*}
\operatorname{tr}\big(X (\operatorname{tr}_A(B))^* \big)&=\overline{\operatorname{tr}\big(X^*\operatorname{tr}_A(B)\big)}\\
&=\overline{\operatorname{tr}((X^*\otimes A)B)}=\operatorname{tr}((X\otimes A^*)B^*)=\operatorname{tr}\big(X\operatorname{tr}_{A^*}(B^*)\big)\,.
\end{align*}
(iii). Obviously, given any $X_1,X_2\in\mathbb C^{n\times n}$ one has $X_1=X_2$ if and only if $\operatorname{tr}(\omega X_1)=\operatorname{tr}(\omega X_2)$ for all $\omega \in\mathbb C^{n\times n}$. Thus \eqref{eq:lemma_rewr_trace_1} follows from
\begin{align*}
\operatorname{tr}\big(\omega \operatorname{tr}_{\mathcal K}\big(B(X\otimes A)\big)\big)&=\operatorname{tr}\big((\omega \otimes\mathbbm1)B(X\otimes A)\big)\\
&=\operatorname{tr}\big((X\omega \otimes A)B\big)\\
&=\operatorname{tr}\big(X\omega \operatorname{tr}_ A(B))=\operatorname{tr}\big(\omega \operatorname{tr}_ A(B)X)\,,
\end{align*}
and \eqref{eq:lemma_rewr_trace_2} is proven analogously. Finally, (iv) is readily verified.
\end{proof}
Therefore
$$
\frac{d}{dt}\Phi_t\Big|_{t=0}=\operatorname{tr}_\omega(iH)(\cdot)+(\cdot)\operatorname{tr}_{\omega}(-iH)=i\big[\operatorname{tr}_\omega(H),\,\cdot\,\big]\,.
$$
On the other hand $\Phi$ by assumption is a quantum-dynamical semigroup
so $\frac{d}{dt}\Phi_t|_{t=0}$ equals its \textsc{gksl}-generator \cite{GKS76,Lindblad76}.
Therefore
$$
\Phi_t\equiv e^{t\operatorname{ad}_{\operatorname{tr}_\omega(iH)}}= e^{it \operatorname{tr}_\omega(H)}(\cdot)e^{-it \operatorname{tr}_\omega(H)}
$$
meaning that $\Phi_t$ is a unitary channel for all $t\geq 0$.
Here we used that $\operatorname{tr}_\omega(H)$ is Hermitian because $H$ and $\omega$ are self-adjoint (Lemma~\ref{lemma_rewrite_trace_expression} (ii)).

\section*{Appendix \app{C}: Proof of Theorem~\ref{thm_main_2}}

Applying Lemma~\ref{lemma_der_stinespring_general} multiple times (with $B=iH$ as well as $E=(\cdot)\otimes\omega$, $E=iH((\cdot)\otimes\omega)$, and $E=((\cdot)\otimes\omega)(-iH)$, respectively) -- together with Lemma~\ref{lemma_rewrite_trace_expression} (iii) -- shows that the second derivative of any type~I Stinespring curve at zero is given by
\begin{align}
\operatorname{tr}_{\mathcal K}\big( (iH)^2& ((\cdot)\otimes\omega)  \big)+
\operatorname{tr}_{\mathcal K}\big(  ((\cdot)\otimes\omega) (-iH)^2 \big)+
2\operatorname{tr}_{\mathcal K}\big( (iH)  ((\cdot)\otimes\omega) (-iH) \big)\nonumber\\
&=-\operatorname{tr}_{\mathcal K}\big( H^2 ((\cdot)\otimes\omega)  \big)-
\operatorname{tr}_{\mathcal K}\big(  ((\cdot)\otimes\omega) H^2 \big)+
2\operatorname{tr}_{\mathcal K}\big( H  ((\cdot)\otimes\omega) H \big)\nonumber\\
&=-\operatorname{tr}_\omega(H^2)(\cdot) -(\cdot)\operatorname{tr}_\omega(H^2)+
2\operatorname{tr}_{\mathcal K}\big( H  ((\cdot)\otimes\omega) H \big)\label{eq:Stinespring_second_der2}
\end{align}

(i): We decompose $\omega=\sum_{k\in N}r_k|g_k\rangle\langle g_k|$ for some $N\subseteq\mathbb N$, $r_k>0$, and some orthonormal system $\{g_k\}_{n\in\mathbb N}$ in $\mathcal K$ \cite[Prop.~16.2]{MeiseVogt97en}. We complete the latter to an orthonormal basis $\{g_j\}_{j\in J}$ of $\mathcal K$  (i.e.~$N\subseteq J$) \cite[Prop.~12.6]{MeiseVogt97en}
and claim that $\ddot\Phi_0=-\sum_{(j,k)\in J\times N}{\bf\Gamma}_{V_{jk}}$ where
$V_{jk}:=\sqrt{2r_k}\operatorname{tr}_{|g_k\rangle\langle g_j|}(H)$ for all $j\in J$, $k\in N$. In order to prove this we need the following lemma regarding partial traces:

\begin{lemma}\label{lemma_partial_trace}
As in Lemma~\ref{lemma_rewrite_trace_expression} let $\operatorname{tr}_{(\cdot)}$ denote the partial trace with respect to a given trace class operator.
For all Hilbert spaces $\mathcal H,\mathcal K$
the following hold:
\begin{itemize}
\item[(i)] For all $B\in\mathcal B(\mathcal H\otimes\mathcal K)$ and all $\psi,\phi\in\mathcal K$
one has
$\operatorname{tr}_{|\psi\rangle\langle\phi|}(B)=\iota_\phi^*B\iota_\psi$ 
and $B\otimes|\psi\rangle\langle\phi|=\iota_\psi B\iota_\phi^*$
where, here and henceforth, $\iota_y:\mathcal H\to\mathcal H\otimes\mathcal K$ for any $y\in\mathcal K$ is the map $x\mapsto x\otimes y$.
\item[(ii)] If $\{g_j\}_{j\in J}$ is any orthonormal basis of $\mathcal K$, then $\sum_{j\in J}\iota_{g_j}\iota_{g_j}^*=\mathbbm1_{\mathcal H\otimes\mathcal K}$ in the strong operator topology.
\end{itemize}
\end{lemma}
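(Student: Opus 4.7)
The plan is to first compute the adjoint $\iota_y^*\colon \mathcal H\otimes\mathcal K\to\mathcal H$ explicitly. By the defining relation $\langle \iota_y^*(x'\otimes y'),x\rangle_{\mathcal H}=\langle x'\otimes y',x\otimes y\rangle_{\mathcal H\otimes\mathcal K}=\langle y,y'\rangle\langle x',x\rangle$ one gets $\iota_y^*(x'\otimes y')=\langle y,y'\rangle\,x'$, which extends by linearity and continuity to all of $\mathcal H\otimes\mathcal K$. In particular $\iota_y^*\iota_y=\mathbbm1_{\mathcal H}$, which will be used repeatedly. With this formula at hand both parts become direct computations.

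For the first identity of part~(i), I would invoke Lemma~\ref{lemma_rewrite_trace_expression}(i), which characterizes $\operatorname{tr}_{|\psi\rangle\langle\phi|}(B)$ as the unique matrix $M\in\mathbb C^{n\times n}$ satisfying $\operatorname{tr}(MX)=\operatorname{tr}(B(X\otimes|\psi\rangle\langle\phi|))$ for all $X\in\mathbb C^{n\times n}$. It then suffices to check the candidate $M:=\iota_\phi^*B\iota_\psi$ against rank-one operators $X=|x\rangle\langle y|$; using $X\otimes|\psi\rangle\langle\phi|=|x\otimes\psi\rangle\langle y\otimes\phi|$ on the right and the explicit formula for $\iota_\phi^*$ on the left, both sides collapse to $\langle y\otimes\phi,B(x\otimes\psi)\rangle$. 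For the second identity—where $B$ must be read as an operator on $\mathcal H$ in order for $B\otimes|\psi\rangle\langle\phi|$ to land in $\mathcal B(\mathcal H\otimes\mathcal K)$—I would evaluate on a simple tensor: $\iota_\psi B\iota_\phi^*(x\otimes y)=\iota_\psi\bigl(\langle\phi,y\rangle Bx\bigr)=\langle\phi,y\rangle\,Bx\otimes\psi=(B\otimes|\psi\rangle\langle\phi|)(x\otimes y)$, and conclude by linearity and continuity.

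For part~(ii), observe that $P_j:=\iota_{g_j}\iota_{g_j}^*$ is self-adjoint and, thanks to $\iota_{g_j}^*\iota_{g_j}=\mathbbm1_{\mathcal H}$, idempotent; hence $P_j$ is the orthogonal projection onto the closed subspace $\mathcal H\otimes\mathbb Cg_j$. Since $\{g_j\}_{j\in J}$ is orthonormal, the family $\{P_j\}_{j\in J}$ is pairwise orthogonal, and the closed span of their ranges equals $\mathcal H\otimes\mathcal K$. To upgrade this to strong convergence of $\sum_j P_j$ to $\mathbbm1_{\mathcal H\otimes\mathcal K}$, I would pick any ONB $\{h_i\}_{i\in I}$ of $\mathcal H$ so that $\{h_i\otimes g_j\}$ is an ONB of $\mathcal H\otimes\mathcal K$; Parseval then yields $\|v\|^2=\sum_j\|\iota_{g_j}^*v\|^2$, and the pythagorean identity $\|v-\sum_{j\in F}P_jv\|^2=\|v\|^2-\sum_{j\in F}\|\iota_{g_j}^*v\|^2$ forces the partial sums to converge to $v$ in norm, i.e.~the series converges strongly. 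The only genuine subtlety is precisely this last point—being careful that convergence in~(ii) is strong rather than norm when $\mathcal K$ is infinite-dimensional—while everything else amounts to bookkeeping with adjoints of tensor-product maps.
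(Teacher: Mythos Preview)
Your proposal is correct and follows essentially the same route as the paper. For (i) the paper likewise verifies the first identity by pairing against rank-one operators (written as inner products $\langle x,\cdot\,y\rangle$) and dispatches the second identity in one line; for (ii) the paper checks the action directly on pure tensors via the basis expansion $y=\sum_j\langle g_j,y\rangle g_j$ and appeals to density, which is the same Parseval content you spell out through the orthogonal-projection picture---your version is slightly more explicit about why strong (rather than norm) convergence is the right notion, but the underlying argument is identical.
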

\begin{proof}
(i):
For all $x,y\in\mathcal H$
\begin{align*}
\langle x,\iota_\phi^*B\iota_\psi y\rangle=\langle \iota_\phi x,B\iota_\psi y\rangle&=\langle (x\otimes\phi),B(y\otimes\psi)\rangle \\
&=\operatorname{tr}\big(|y\otimes\psi\rangle\langle x\otimes\phi|B\big)\\
&=\operatorname{tr}\big((|y\rangle\langle x|\otimes|\psi\rangle\langle \phi|)B\big)\\
&=\operatorname{tr}\big(|y\rangle\langle x|\operatorname{tr}_{|\psi\rangle\langle \phi|}(B)\big)=\langle x,\operatorname{tr}_{|\psi\rangle\langle \phi|}(B) y\rangle\,.
\end{align*}
The second equality follows, e.g., from duality (see also Lemma~\ref{lemma_rewrite_trace_expression}).

(ii): It suffices to verify this equality on pure tensors as the span of those is dense in $\mathcal H\otimes\mathcal K$. Indeed for all $x\in\mathcal H$, $y\in\mathcal K$ one finds
\begin{align*}
\Big(\sum_{j\in J}\iota_{g_j}\iota_{g_j}^*\Big)(x\otimes y)&=\sum_{j\in J}\iota_{g_j}\langle g_j,y\rangle x\\
&=\sum_{j\in J}\langle g_j,y\rangle(x\otimes g_j)=x\otimes \Big(\sum_{j\in J}\langle g_j,y\rangle g_j\Big)=x\otimes y\,.
\end{align*}
In the last step we used the basis expansion formula \cite[Prop.~12.4]{MeiseVogt97en}.
\end{proof}

Now given any $A,B\in\mathbb C^{n\times n}$, applying \eqref{eq:Stinespring_second_der2} yields
\begin{align*}
\operatorname{tr}\big(B\ddot\Phi_0(A)\big)=&-\operatorname{tr}\big(B\operatorname{tr}_\omega(H^2)A\big)-\operatorname{tr}\big(BA\operatorname{tr}_\omega(H^2)\big)\\
&+2\operatorname{tr}\big(B\operatorname{tr}_{\mathcal K}\big( H  (A\otimes\omega) H \big)\big)\\
&=-\operatorname{tr}\big((AB\otimes\omega)H^2\big)-\operatorname{tr}\big((BA\otimes\omega)H^2\big)\\
&+2\operatorname{tr}\big((B\otimes\mathbbm1)H  (A\otimes\omega) H \big)\\
&=-\operatorname{tr}\big((A\otimes\omega)(B\otimes\mathbbm1)H^2\big)-\operatorname{tr}\big((B\otimes\mathbbm1)(A\otimes\omega)H^2\big)\\
&+2\operatorname{tr}\big((B\otimes\mathbbm1)H  (A\otimes\omega) H \big)\,.
\end{align*}
Inserting the expansions $\omega=\sum_{k\in N}r_k|g_k\rangle\langle g_k|$ as well as $\mathbbm1_{\mathcal K}=\sum_{j\in J}|g_j\rangle\langle g_j|$ \cite[Prop.~12.4]{MeiseVogt97en}, and making use of Lemma~\ref{lemma_partial_trace} we find that the previous expression is equal to
\begin{align*}
\sum_{(j,k)\in J\times N}r_k\Big(&-\operatorname{tr}\big((A\otimes  |g_k\rangle\langle g_k|)(B\otimes |g_j\rangle\langle g_j|)H^2\big)\\
&-\operatorname{tr}\big((B\otimes |g_j\rangle\langle g_j|)(A\otimes  |g_k\rangle\langle g_k|)H^2\big)\\
&+2\operatorname{tr}\big((B\otimes |g_j\rangle\langle g_j|)H  (A\otimes  |g_k\rangle\langle g_k|) H \big)\Big)\\
=
\sum_{(j,k)\in J\times N}r_k\Big(&-\operatorname{tr}\big(\iota_{g_k}A\iota_{g_k}^*\iota_{g_j}B\iota_{g_j}^*H^2\big)-\operatorname{tr}\big(\iota_{g_j}B\iota_{g_j}^*\iota_{g_k}A\iota_{g_k}^*H^2\big)\\
&+2\operatorname{tr}\big(\iota_{g_j}B\iota_{g_j}^*H\iota_{g_k}A\iota_{g_k}^* H \big)\Big)\,.
\end{align*}
One readily verifies $\iota_x^*\iota_y=\langle x,y\rangle\mathbbm1_n$ meaning the first two terms simplify to
\begin{align*}
-\sum_{k\in N}r_k\operatorname{tr}\big(\iota_{g_k}AB\iota_{g_k}^*&H^2\big)-\sum_{k\in N}r_k\operatorname{tr}\big(\iota_{g_k}BA\iota_{g_k}^*H^2\big)\\
=&
-\sum_{k\in N}r_k\operatorname{tr}\big(AB\iota_{g_k}^*H^2\iota_{g_k}\big)-\sum_{k\in N}r_k\operatorname{tr}\big(BA\iota_{g_k}^*H^2\iota_{g_k}\big)\\
=&
-\sum_{(j,k)\in J\times N}r_k\operatorname{tr}\big(AB\iota_{g_k}^*H\iota_{g_j}\iota_{g_j}^*H\iota_{g_k}\big)\\
&-\sum_{(j,k)\in J\times N}r_k\operatorname{tr}\big(BA\iota_{g_k}^*H\iota_{g_j}\iota_{g_j}^*H\iota_{g_k}\big)\,.
\end{align*}
Here we used a more general cyclicity property of the trace \cite[Lemma 3.1]{DvE18} as well as Lemma~\ref{lemma_partial_trace} (ii).
At this point we are almost done; all that is left is showing that $\operatorname{tr}(B{\bf\Gamma}_{V_{jk}}(A))$ (with $V_{jk}=\sqrt{2r_k}\operatorname{tr}_{|g_k\rangle\langle g_j|}(H)$ as above) for all $j\in J$, $k\in N$ equals
\begin{equation}\label{eq:app_D_left_to_show}
\begin{split}
r_k\operatorname{tr}(AB\iota_{g_k}^*H\iota_{g_j}\iota_{g_j}^*H\iota_{g_k}) + r_k\operatorname{tr}&(BA\iota_{g_k}^*H\iota_{g_j}\iota_{g_j}^*H\iota_{g_k}) \\
&- 2r_k\operatorname{tr} (B\iota_{g_j}^*H\iota_{g_k}A\iota_{g_k}^* H \iota_{g_j}  )\,.
\end{split}
\end{equation}
Using Lemma~\ref{lemma_rewrite_trace_expression} (i) and Lemma~\ref{lemma_partial_trace}, expression \eqref{eq:app_D_left_to_show} comes out to be
\begin{align*}
r_k\operatorname{tr}\big(AB\operatorname{tr}_{|g_j\rangle\langle g_k|}&(H)\operatorname{tr}_{|g_k\rangle\langle g_j|}(H)\big) \\
&\qquad+ r_k\operatorname{tr}\big(BA\operatorname{tr}_{|g_j\rangle\langle g_k|}(H)\operatorname{tr}_{|g_k\rangle\langle g_j|}(H)\big)\\
&\qquad- 2r_k\operatorname{tr} \big(B\operatorname{tr}_{|g_k\rangle\langle g_j|}(H)A\operatorname{tr}_{|g_j\rangle\langle g_k|}(H) \big)\\
=\;&r_k\operatorname{tr}\big(AB\big(\operatorname{tr}_{|g_k\rangle\langle g_j|}(H)\big)^*\operatorname{tr}_{|g_k\rangle\langle g_j|}(H)\big) \\
&\qquad+ r_k\operatorname{tr}\big(BA\big(\operatorname{tr}_{|g_k\rangle\langle g_j|}(H)\big)^*\operatorname{tr}_{|g_k\rangle\langle g_j|}(H)\big)\\
&\qquad- 2r_k\operatorname{tr} \big(B\operatorname{tr}_{|g_k\rangle\langle g_j|}(H)A\big(\operatorname{tr}_{|g_k\rangle\langle g_j|}(H)\big)^* \big)\\
=\;& \frac12 \operatorname{tr}(ABV_{jk}^* V_{jk} )+\frac12\operatorname{tr}(BA V_{jk}^* V_{jk})-\operatorname{tr}(BV_{jk}A V_{jk}^*)\\
=\;& \frac12 \operatorname{tr}(BV_{jk}^* V_{jk} A)+\frac12\operatorname{tr}(BA V_{jk}^* V_{jk})-\operatorname{tr}(BV_{jk}A V_{jk}^*)
\end{align*}
which by \eqref{eq:lindblad_V} is $\operatorname{tr}(B{\bf\Gamma}_{V_{jk}}(A))$ as desired.
Finally, defining $\Psi\in\mathcal L(\mathbb C^{n\times n})$ via $\Psi(A):=2 \operatorname{tr}_{\mathcal K}\big( H  (A\otimes\omega) H \big)$
by~\eqref{eq:Stinespring_second_der2} yields $\ddot\Phi_0=\Psi-\frac12\Psi^*(\mathbbm1)(\cdot)-(\cdot)\frac12\Psi^*(\mathbbm1)$. 
Because $\Psi$ is completely positive (as composition of completely positive maps) it admits Kraus operators $\{V_j\}_{j=1}^\ell$ such that $\ell\leq n^2$ \cite[Rem.~6]{Choi75}. Thus $\ddot\Phi_0=-\sum_{j=1}^\ell{\bf\Gamma}_{V_j}$ which shows that the number of dissipative terms of $\ddot\Phi_0$ can be upper bounded by $n^2$ --- as well as $|J\times N|=\operatorname{rk}(\omega)\operatorname{dim}(\mathcal K)$.

(ii): Given a finite set of matrices $\{V_j\}_{j\in J}$ (w.l.o.g.~$\{V_1,\ldots,V_{|J|}\}$) $\subset\mathbb C^{n\times n}$ define $m:=|J|+1$ as well as
$$
H:=\frac{1}{\sqrt2}\sum_{j=1}^{m-1}\big(V_j\otimes|e_{j+1}\rangle\langle e_1| + V_j^*\otimes|e_{1}\rangle\langle e_{j+1}|\big)\in\mathbb C^{n\times n}\otimes\mathbb C^{m\times m}\,.
$$
Because $H$ is a Hermitian matrix, 
$\Phi(t):= \operatorname{tr}_{\mathbb C^m}(e^{iHt}((\cdot) \otimes |e_1\rangle\langle e_1|)e^{-iHt})$ is a type~I Stinespring curve. 
What we have to show now is that the this curve's second derivative at zero equals $-\sum_{j\in J}{\bf\Gamma_{V_j}}$.

Writing $\nu_j:=V_j\otimes|e_{j+1}\rangle\langle e_1| + V_j^*\otimes|e_{1}\rangle\langle e_{j+1}|$ for all $j=1,\ldots,m-1$, using \eqref{eq:Stinespring_second_der2} one finds
\begin{align*}\
\ddot\Phi(0)=\sum_{j,k=1}^{m-1}\Big( -\frac12\operatorname{tr}_{|e_1\rangle\langle e_1|}(\nu_j\nu_k)(\cdot)&-\frac12(\cdot)\operatorname{tr}_{|e_1\rangle\langle e_1|}(\nu_j\nu_k)\\
&+\operatorname{tr}_{\mathbb C^m}\big(\nu_j((\cdot)\otimes|e_1\rangle\langle e_1|)\nu_k\big) \Big)\,.
\end{align*}
The first two expressions simplify due to
\begin{align*}
\operatorname{tr}_{|e_1\rangle\langle e_1|}(\nu_j\nu_k)&=\operatorname{tr}_{|e_1\rangle\langle e_1|}(V_jV_k^*\otimes|e_{j+1}\rangle\langle e_{k+1}|+V_j^*V_k\otimes|e_1\rangle\langle e_1|\delta_{jk})\\
&=V_jV_k^*\operatorname{tr}\big(|e_1\rangle\langle e_1,e_{j+1}\rangle\langle e_{k+1}|  \big)  +\delta_{jk} V_j^*V_k\operatorname{tr}\big(|e_1\rangle\langle e_1,e_1\rangle\langle e_1|\big)\\
&=\delta_{jk} V_j^*V_k
\end{align*}
for all $j,k=1,\ldots,m-1$.
Similarly one finds
$$
\operatorname{tr}_{\mathbb C^m}\big(\nu_j((\cdot)\otimes|e_1\rangle\langle e_1|)\nu_k\big)=\operatorname{tr}_{\mathbb C^m}\big(V_j(\cdot)V_k^*\otimes|e_{j+1}\rangle\langle e_{k+1}|\big)=\delta_{jk}V_j(\cdot)V_k^*\,.
$$
Combining all of this yields
\begin{align*}
\ddot\Phi(0)&=\sum_{j,k=1}^{m-1}\Big( -\frac12\delta_{jk} V_j^*V_k(\cdot)-\frac12(\cdot)\delta_{jk} V_j^*V_k+\delta_{jk}V_j(\cdot)V_k^*\Big)\\
&=-\sum_{j=1}^{|J|}\Big( \frac12 \big(V_j^* V_j (\cdot)+(\cdot) V_j^* V_j\big)-V_j(\cdot) V_j^* \Big)=-\sum_{j\in J}{\bf\Gamma}_{V_j}
\end{align*}
which concludes the proof.

\bibliographystyle{mystyle}
\bibliography{../../../../control21vJan20.bib}

\end{document}